\numberwithin{equation}{section}
\newcommand{\beq}{\begin{equation}}
\newcommand{\eeq}{\end{equation}}
\newcommand{\bea}{\begin{eqnarray}}
\newcommand{\eea}{\end{eqnarray}}
\newcommand{\nn}{\nonumber}
\newcommand\noi{\noindent}
\newcommand{\bk}{\begin{cases}}
\newcommand{\ek}{\end{cases}}
\newcommand{\tbf}{\textbf}
\newtheorem{definition}{Definition}
\newtheorem{theorem}{Theorem}
\newtheorem{corollary}{Corollary}
\newtheorem{lemma}{Lemma}
\theoremstyle{definition}
\newtheorem{remark}{\textbf{Remark}}
\tikzset{>=stealth}
\begin{document}

\author{Daniel Reyes}
\address{Departamento de F\'{\i}sica Te\'{o}rica, Facultad de Ciencias F\'{\i}sicas, Universidad
Complutense de Madrid, 28040 -- Madrid, Spain 
\\ and Instituto de Ciencias Matem\'aticas, C/ Nicol\'as Cabrera, No 13--15, 28049 Madrid, Spain
}
\email{danreyes@fis.ucm.es}
\author{Miguel A. Rodr\'iguez}
\address{Departamento de F\'{\i}sica Te\'{o}rica, Facultad de Ciencias F\'{\i}sicas,
  Universidad Complutense de Madrid, 28040 -- Madrid, Spain }
\email{rodrigue@ucm.es}
\author{Piergiulio Tempesta}
\address{Departamento de F\'{\i}sica Te\'{o}rica, Facultad de Ciencias F\'{\i}sicas,
  Universidad Complutense de Madrid, 28040 -- Madrid, Spain \\ and Instituto de Ciencias
  Matem\'aticas, C/ Nicol\'as Cabrera, No 13--15, 28049 Madrid, Spain}
\email{p.tempesta@fis.ucm.es, piergiulio.tempesta@icmat.es}

\title[A Frobenius-type theory for discrete systems]{A Frobenius-type theory for discrete systems}

\subjclass[2010]{MSC: 53A45, 58C40, 58A30.}

\date{July 16th, 2024}

\begin{abstract}
We develop an approach analogous to classical Frobenius theory for the analysis of singularities of ODEs in the case of discrete dynamical systems. Our methodology is based on the Roman-Rota theory of finite operators and relies crucially on the idea of preserving the Leibniz rule on a lattice of points by means of the notion of Rota algebras. The relevant cases of the Bessel, Hermite and Airy equations are discussed. 
\end{abstract}

\maketitle

\tableofcontents

\section{Introduction}

One of the most relevant problems in modern mathematical physics is the study of discretization approaches that can preserve 
the symmetry and integrability properties of dynamical models expressed in terms of ODEs and PDEs (see, e.g., \cite{BS, LNP, LWY, Suris}). Discrete formulations of quantum theories have been proposed, for instance, in several contexts of quantum gravity (see, e.g., \cite{FL, GP, RS}).

In the last decades, the \textit{finite operator theory} (also called Umbral Calculus), developed by G. C. Rota and his collaborators \cite{Roman, RR, Rota} as a unifying language for the study of sequences of polynomials and combinatorial problems, has also been used as a novel theoretical framework for discretizing quantum mechanical models  \cite{DMS, LNO, LTW1, LT2011, LTW2} and for the multi-scale analysis of dynamical systems on a lattice \cite{LT2011}.

A consistent approach, based on this philosophy, has been formulated more recently: it requires that the discrete operators act as derivations with respect to a non-standard multiplication rule; in other words, they must preserve the Leibniz rule on lattices. This idea was proposed, from different perspectives and contexts,  in \cite{BF, Ward, Ismail} and then in \cite{PTJDE, PTprep}.

From a technical point of view, the latter approach requires the definition of a suitable non-local product of functions, which acts on the basic polynomials associated with a given discrete derivative (delta operator) in the same way as the standard local product acts on monomials. 
In this way, the whole class of exact solutions expressed in terms of power series is preserved. Formally, the algebraic structure in which the discretization process will be implemented is that of the Rota algebra, introduced in \cite{PTJDE}. It is in fact a Galois differential algebra of formal power series, equipped with a delta operator and a suitable product, which makes it act as a derivation. 

A second technical requirement, apart from preserving the Leibniz rule, is the identification of the lattice of points where the dynamics takes place with the set of zeros of the basic polynomials associated with a given delta operator. This fundamental requirement solves the main problem associated with the previous ``umbral'' approaches to the discretization problem, namely, the fact that the theory was essentially formulated in the language of formal power series, typically divergent.

In \cite{PTJDE, PTprep}, this approach has been formulated in a categorical framework. In particular, the discrete analogues of a large class of both linear and nonlinear dynamical systems, whose coefficients are expressed in terms of polynomials, have been determined. A recent related study, aimed at the discretization of the Euler equation, has been proposed in \cite{RT2024}.

In this article, we shall focus more specifically on the study of the singularity properties of discrete versions of second-order linear ODEs with \textit{analytical} coefficients. Also, from a methodological point of view, we adopt a different strategy with respect to \cite{PTJDE, PTprep}. In fact, our proofs are new and independent, since they do not require the use of category theory. 

Our main result is a new theoretical framework, formalized in terms of theorems \ref{Theorem1}-\ref{Theorem3}, which can be considered as a \textit{discrete analogue of the classical Frobenius approach} to second-order ODEs. This framework  allows us to construct a family of exact solutions for a large class of second-order discrete dynamical models, according to the type of singularities they possess. Indeed, we can discretize ODEs admitting both ordinary and singular regular points, in such a way that a class of exact solutions of the continuous case are preserved by construction. To our knowledge, there is no analogous result in the literature on discrete systems.

We will also apply our theorems to the determination of discrete formulations of classical ODEs, such as the Airy, Hermite and Bessel equations.

In our discussion, we will take a constructive approach, based on finite operator calculus, and for simplicity we will derive all results in a self-contained manner.

In \cite{Hilger1990}, an interesting approach, nowadays known as time scale calculus, based on the idea of unifying discrete and continuous calculus, was initiated. From a methodological point of view, this unification is realized by defining a general domain that can be continuous, discrete, or mixed (time scales or, more generally, measure chains) and suitable jump operators on this domain. It has been actively investigated in the last years by many authors (see, e.g., the recent monograph \cite{BG2017}), and a large body of results and applications is available. However, the theoretical framework proposed in this article is independent and essentially different from the time series calculus, being based on the alternative ideas of preserving the Leibniz rule in discrete calculus and of using, for the discretization, an adapted lattice of points.

The structure of the  paper is as follows. In Section \ref{sec:2},  we review some basic notions of finite operator theory. 
In Section \ref{sec:3}, we introduce a generalization of Frobenius theory on a regular lattice of points. In particular, we  study the case of ODEs admitting ordinary points, and prove Theorem \ref{Theorem1}. In Section \ref{sec:4}, we extend our analysis to the case of singular regular points, leading to Theorem \ref{Theorem2}. In Section \ref{sec:5}, we apply our approach to define a discrete version of the Bessel equation. A novel discretization scheme on the real negative axis, related to the backward discrete derivative, and the related Theorem \ref{Theorem3} are proposed in Section \ref{sec:6}. Some future research perspectives are outlined in the final Section \ref{sec:7}.

\section{The finite operator calculus: Background and notation}
\label{sec:2}
The purpose of this section is to briefly review some basic notions and fundamental results of the Roman-Rota's formulation of the theory of difference operators, which provides a very natural language for defining the theory of basic, Appell and Sheffer sequences of polynomials.

This approach can be regarded as the modern version of the  Umbral Calculus of the XIX century introduced by Sylvester, Cayley, Blissard, and other authors \cite{BL2000}. 

\subsection{Basic notions}
Let $\mathbb{N}$ be the set of non-negative integers, $\mathbb{K}$ be a field of characteristic zero and $\mathcal{F}[[\mathbb{K}]]$ the algebra of formal power series in $x$. We shall denote by   $\mathcal{P}$ the space of polynomials in a variable $x\in \mathbb{K}$.  Throughout this article, $\mathbb{K}=\mathbb{R}$. 
\begin{definition}
An operator $T^{h}: \mathcal{P} \to \mathcal{P}$ which satisfies the relation
\[
T^{h} p(x)= p(x+h), \qquad h>0, 
\]
will be said to be a \textit{shift operator}. 
\end{definition}
From now on, for simplicity, we will use the notation $T:=T^{h}$, for $h=1$.

\begin{definition}\label{deltaop}
An operator $S$  is said to be {\it shift-invariant} if it
commutes with all the shift operators $T^{h}$.
\end{definition}
Let $\mathcal{A}$ be the algebra of shift-invariant operators, equipped with the usual operations: the sum and the product of
two operators, and the product of a scalar with an operator.
\begin{definition}
 A shift-invariant operator $Q$ is said to be a \textit{delta operator} if $Q \, x=const\neq0$.
\end{definition}
One can immediately prove the following
\begin{corollary}\label{cor1}
For each constant $c\in \mathbb{R}$, $Q \, c=0$.
\end{corollary}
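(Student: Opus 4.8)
The plan is to exploit the two defining features of a delta operator — linearity (as an element of the algebra $\mathcal{A}$ of shift-invariant operators) and commutation with every shift $T^{h}$ — together with the elementary fact that a shift operator fixes constant functions. The core identity I would extract is $Q\,1 = 0$, from which the general statement follows at once by linearity, since $Q\,c = Q(c\cdot 1) = c\,Q\,1$.

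First I would apply the operator identity $Q\,T^{h} = T^{h}\,Q$ to the polynomial $x$. On one side, $T^{h} x = x+h$, so that $Q(T^{h} x) = Q(x+h) = Q\,x + h\,Q\,1$, where the last equality uses linearity of $Q$ and the fact that $h$ enters as a scalar. On the other side, by the definition of a delta operator $Q\,x$ is a nonzero constant, and a shift operator leaves constant functions unchanged, so $T^{h}(Q\,x) = Q\,x$.

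Equating the two expressions yields $Q\,x + h\,Q\,1 = Q\,x$, hence $h\,Q\,1 = 0$. Since $h>0$ is nonzero, this forces $Q\,1 = 0$, and therefore $Q\,c = c\,Q\,1 = 0$ for every constant $c\in\mathbb{R}$.

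The argument is essentially immediate once the commutation relation is applied to the right object; the only point requiring a moment's care is the bookkeeping between the \emph{shift amount} $h$, which appears as a scalar multiplier after expanding $Q(x+h)$ by linearity, and the \emph{constant function} $Q\,x$, which is unaffected by the shift because shifting a constant reproduces the same constant. I do not expect any deeper obstacle, which is consistent with the statement being offered as a corollary rather than a theorem.
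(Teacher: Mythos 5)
Your proof is correct and is precisely the standard argument the paper leaves implicit (it only remarks that the corollary is ``immediately'' provable): applying shift-invariance to $x$, using that $Q\,x$ is a constant fixed by $T^{h}$, and concluding $Q\,1=0$ by linearity. Nothing to add.
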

The notion of delta operator has been investigated extensively  in the literature (see, for example, \cite{Rota}). Very typical examples are  the standard derivative $D$, the forward discrete derivative $\Delta^{+}=T-\mathbf{1}$, the backward derivative $\Delta^{-}=\mathbf{1}-T^{-1}$ and the symmetric derivative $\Delta^{s}=\frac{T-T^{-1}}{2}$. 

If we denote the operator $D^k$ by $t^k$, its action on a polynomial is given by $t^k p(x)= p^{(k)}(x)=\frac{d^k p(x)}{dx^k}$. By extending this action by linearity, we can also associate a differential operator with a formal power series:
\[
F(t)=\sum_{k}\frac{a_k t^k}{k!} \Longrightarrow F(D)= \sum_{k} \frac{a_k D^k}{k!}.
\]
This correspondence allows us to represent delta operators in terms of a function $q(t)$, called the \textit{representative} of $Q$.  For instance, a non-standard example is the Gould operator $q^{G}(t)=e^{at}(e^{bt}-1)$. Another interesting case is the Abel operator $q^{A}(t)=t e^{\sigma t}$, whose basic sequence is the set of Abel polynomials $p^{A}_n(x)=x(x-n \sigma)^{n-1}$ \cite{Roman}.

\begin{definition}\label{def2}
 A polynomial sequence $\{p_{n}\left(x\right)\}_{n\in\mathbb{N}}$ is said to be the sequence of \textit{basic polynomials} associated with a delta operator $Q$ if the following conditions are satisfied:
\begin{align}
&  1)\text{ \ }p_{0}\left(  x\right)  =1;\notag\\
&  2)\text{ \ }p_{n}\left(
0\right)  =0\ \text{for all }n>0;\text{ \ }\nonumber\\
&  3)\text{ \ }Q p_{n}\left(  x\right)  =np_{n-1}\left(  x\right).\nonumber
\end{align}
\end{definition}
As is well known, there exists a unique sequence of  basic polynomials associated with a delta operator \cite{Rota}. For instance, for $Q= D$,  the sequence of basic polynomials is given by $p_n(x)=x^n$. For the forward and backward discrete derivatives $\Delta^{\pm}$, we have 
\begin{equation}
p_{0}^{\pm}(x)=1, \hspace{2mm} p_n^{\pm}(x):=x(x\mp 1)(x \mp 2 )...(x\mp (n-1)), \label{eq:2.1}
\end{equation}
respectively. For $\Delta^{s}$, the basic polynomials are expressed in terms of Gould polynomials \cite{LT2011}, which are the basic sequence for the operator $q^{G}$. 

\subsection{Pincherle derivative and basic polynomials}
We revise a technique, discussed in \cite{LTW2}, which allows us to determine the basic polynomials associated with an arbitrary delta operator by using the algebraic notion of the Pincherle derivative.  Let $x: p(x) \to x p(x)$ denote the multiplication
operator. 
\begin{definition}
The \textit{Pincherle derivative} of a delta operator $Q\in \mathcal{A}$ is defined by the relation
\begin{equation}\label{PD}
Q':=[Q,x].
\end{equation}
\end{definition}
If $Q$ is a delta operator, then $Q'$ is invertible \cite{Rota}. We shall determine an operator $\beta\in \mathcal{A}$ such that the Heisenberg-Weyl algebra is satisfied \cite{LNO}:
\begin{equation}
[Q, x\beta]= \mathbf{1}.
\end{equation}
Given a delta operator $Q$, we will call $\beta$ the conjugate operator of $Q$. This operator exists because it is uniquely determined by the relation $\beta= (Q')^{-1}$. 
It is easy to see that for $Q=D$, $\beta= \mathbf{1}$; for $Q=\Delta^{+}$, $\beta= T^{-1}$; for $Q= \Delta^{-}$, $\beta=T$. 

Thus, the basic polynomials associated with a delta operator $Q$  can be determined by the simple relation 
\begin{equation}
p_n(x)= (x\beta)^n\cdot 1 \ .
\end{equation}

\subsection{Discretization of dynamical systems: Rota algebras}
First, we observe that for a given $Q$, its family of basic polynomials $\{p_n(x)\}_{n\in\mathbb{N}}$ provides a basis for the algebra of formal power series $\mathcal{F}$. Thus, any $f\in\mathcal{F}$ can be expanded into a formal series of the form 
\[
f(x)=\sum_{n=0}^{\infty}a_n p_n(x) \label{exp}.
\]
We also introduce in $\mathcal{F}$ the  $*$ product, which is associative and commutative, defined by the relation
\begin{equation}
p_n(x)*p_m(x):=p_{n+m}(x). \label{starproduct}
\end{equation}
This product for the forward difference operator $\Delta$ was defined in \cite{Ward} and \cite{Ismail}. Also, the $*$ product can naturally  be extended by linearity to  spaces of formal power series. If $f, g\in \mathcal{F}$ are defined on a set of points $\mathcal{L}\in \mathbb{R}$ (usually referred to as the lattice $\mathcal{L}$) and are expanded in the polynomial basis provided by the basic sequence $\{p_n(x)\}_{n\in \mathbb{N}}$, we have the fundamental relation
\begin{equation}\label{eq:der}
\Delta (f*g) = (\Delta f)*g + f*(\Delta g).
\end{equation}
In other words, $\Delta$ acts as a derivation with respect to the $*$-product: the Leibniz rule is restored on $\mathcal{F}$. Moreover, if we identify the set of points of our lattice $\mathcal{L}$ with the zeros of the basic polynomials, then all the formal series involved are truncated. 

Starting from these ideas, for any delta operator $Q$ one can define the notion of \textit{Rota algebra}, introduced in \cite{PTJDE} as a natural environment where one can carry out the discretization procedure described above. 
  
\begin{definition}
A Rota differential algebra is a Galois differential algebra $(\mathcal{F}, Q)$, where $(\mathcal{F}, +, \cdot, *_{Q})$ is an associative algebra of formal power series, the product $*_Q$ is the composition law defined by \eqref{starproduct}, and $Q$ is a delta operator acting as a derivation on $\mathcal{F}$:
\begin{equation}
i) \quad Q(a+b)=Q(a)+Q(b), \hspace{10mm} Q(\lambda a)=\lambda Q(a), \quad \lambda\in\mathbb{K},
\end{equation}
\begin{equation}
ii) \quad Q(a*b)= Q(a)*b+a*Q(b).
\end{equation}
\end{definition}

\noi As was proved in \cite{PTJDE}, there exists a unique Rota algebra $(\mathcal{F}, Q)$ 
associated with a delta operator $Q$.

Thus, our discretization amounts to associating with a given ODE a formal equation defined in a Rota algebra, obtained by replacing the continuous derivative $D$ by a delta operator $Q$ (acting as a Galois operator due to eq.\eqref{eq:der}), and the basic sequence $x^{n}$ by the basic polynomials $p_n(x)$.

\begin{center}
\begin{tikzpicture}
    \node (E) at (0,0) {$D$};
    \node (F) at (2,0) {$Q$};
    \node (N) at (2,-2) {$p_{n} (x)$};
    \node (M) at (0,-2) {$x^{n}$};
    \draw[->] (E)--(F) node [midway,above] {}; 
    \draw[<->] (F)--(N) node [midway,right] {}; 
    \draw[->] (M)--(N) node [midway,below] {}; 
    \draw[<->] (E)--(M) node [midway,left] {}; 
\end{tikzpicture}
\end{center}

This discretization scheme extends the  \textit{umbral correspondence} discussed in \cite{Rota,DMS,LTW1,LTW2}, because 
is no longer defined on formal power series, but is \textit{effective}, i.e., due to the truncation phenomenon described above, it generates standard finite-difference dynamical models which admit convergent solutions in one-to-one correspondence with analytic solutions of the original continuous model \cite{Ward,PTJDE,PTprep}. Different choices of $Q$ (and hence of the associated Rota algebra) lead to different discrete models, sharing the same family of solutions, \textit{mutatis mutandis}. In this technical sense, we say that this discretization preserves integrability.

\section{Generalizing Frobenius analysis on the lattice: Regular points} \label{sec:3}

\subsection{Statement of the problem}

The purpose of our analysis is to define a discretized version of the second order linear differential equation
\beq\label{cont}
u''+A(x)u'+B(x)u=0. 
\eeq
More generally, we wish to construct a theory on a discrete lattice of points that preserves the main results of the standard Frobenius approach based on the singularity analysis of the equation \eqref{cont}.

According to the standard terminology, if $A(x)$ and $B(x)$ are analytic functions in a neighbourhood of a given point $x_0$, we shall say that $x_0$ is a regular point for the equation; otherwise, $x_0$ is singular. 
Assuming that $x=0$ is a regular point, let
\beq \label{eq:3.2}
A(x)=\sum_{n=0}^{\infty}a_n x^n,\quad B(x)=\sum_{n=0}^{\infty} b_n x^n
\eeq
be the series expansions of $A(x)$ and $B(x)$ around $x=0$. As is well known \cite{Strauss}, there exists a basis of the space of solutions of eq. \eqref{cont} formed by two analytic functions with convergence radius $R\geq R_{0}:=\min \{R_1,R_2\}$, where  $R_1$ and $R_{2}$ are the convergence radii of the series representing $A(x)$ and $B(x)$. The coefficients in the series expansions of the solutions can be  uniquely determined once the initial conditions are fixed. Let
\beq \label{eq:3.3}
u(x)=\sum_{k=0}^{\infty} \zeta_k x^k
\eeq
be the general form of the two independent series solutions of eq. \eqref{cont}.
We now introduce  the discrete counterpart of eq. \eqref{cont}. Specifically,  we introduce the formal series denoted (with an abuse of notation) by
\beq \label{eq:3.4}
u(x)=\sum_{k=0}^{\infty} \zeta_kp_k(x),
\eeq
where $p_k(x)$ are the basic polynomials associated with a given delta operator $Q$. Let us also define the formal series
\beq
F(x):=\sum_{k=0}^{\infty}a_k p_k(x), \qquad G(x):= \sum_{k=0}^{\infty}b_k p_k(x), 
\eeq
where $\{a_k\}_{k\in\mathbb{N}}$ and $\{b_k\}_{k\in\mathbb{N}}$ are the coefficients of the series \eqref{eq:3.2}.

\begin{definition}
Let $Q$ be a delta operator and $\{p_n(x)\}_{n\in \mathbb{N}}$ be the family of associated basic polynomials. We introduce the class of discrete dynamical systems
\beq\label{discr}
Q^2 u(x) +F(x)* Q u(x) +G(x)*u(x)=0
\eeq
defined on the Rota algebra $(\mathcal{F}, +,\cdot, *_{Q})$.
\end{definition}

\begin{lemma} \label{Lema1}
The discrete model \eqref{discr} admits the formal solution
\beq 
u(x)=\sum_{k=0}^{\infty} \zeta_kp_k(x),
\eeq
where the coefficients $\zeta_k$ are defined by the recursion
\beq \label{eq:3.8}
\zeta_{k+2}
=-\frac{1}{(k+2)(k+1)}  \sum _{m=0}^k \left[(m+1) a_{k-m} \zeta_{m+1} 
+  b_{k-m}\zeta_{m}\right].
\eeq
\end{lemma}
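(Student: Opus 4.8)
The plan is to substitute the three formal series into \eqref{discr}, re-expand every term in the basic polynomial basis $\{p_k(x)\}_{k\in\mathbb{N}}$, and then use the fact that this family is a basis of $\mathcal{F}$ (as observed at the beginning of Section \ref{sec:2}) to match coefficients. Because the $*$-product is built precisely so that $p_i*p_j=p_{i+j}$ and $Q$ acts as a derivation, every operation reduces to elementary manipulations of the summation index, and the whole statement should follow by a direct computation.

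First I would compute $Qu$ and $Q^2u$ using property $3)$ of Definition \ref{def2}, namely $Qp_k(x)=kp_{k-1}(x)$. Applying it once and shifting the index gives $Qu=\sum_{k\ge0}(k+1)\zeta_{k+1}\,p_k(x)$, and applying it a second time and reindexing yields $Q^2u=\sum_{k\ge0}(k+1)(k+2)\zeta_{k+2}\,p_k(x)$. This already isolates the highest-index coefficient $\zeta_{k+2}$ together with its prefactor $(k+1)(k+2)$, which is exactly the quantity to be solved for in \eqref{eq:3.8}.

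Next I would expand the two $*$-products. Since $p_i*p_j=p_{i+j}$ and the product extends bilinearly, both $F*Qu$ and $G*u$ are Cauchy-type convolutions in the basis $\{p_k(x)\}$: collecting the coefficient of $p_k(x)$ produces $\sum_{m=0}^{k}(m+1)a_{k-m}\zeta_{m+1}$ for $F*Qu$ and $\sum_{m=0}^{k}b_{k-m}\zeta_m$ for $G*u$. Summing the three contributions and setting the coefficient of each $p_k(x)$ equal to zero gives the single scalar relation $(k+1)(k+2)\zeta_{k+2}+\sum_{m=0}^{k}\bigl[(m+1)a_{k-m}\zeta_{m+1}+b_{k-m}\zeta_m\bigr]=0$, which rearranges at once into the claimed recursion.

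I do not anticipate a genuine obstacle, since the argument is purely formal and algebraic: the main care is needed only in the reindexing of the convolutions and in the index shift passing from $Qu$ to $Q^2u$. The one conceptual point worth recording is that the recursion determines $\zeta_{k+2}$ solely from lower-index coefficients, so $\zeta_0$ and $\zeta_1$ remain free; the formal solution therefore carries exactly two degrees of freedom, in agreement with the two-dimensional solution space of the continuous equation \eqref{cont}.
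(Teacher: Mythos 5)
Your proposal is correct and follows essentially the same route as the paper: substitute the series, use $Qp_k=kp_{k-1}$ and $p_i*p_j=p_{i+j}$ to reduce everything to convolutions in the basis $\{p_k(x)\}$, and equate the coefficient of each $p_k(x)$ to zero. The only cosmetic difference is that you reindex $Q^2u$ before collecting terms, whereas the paper keeps $\sum_k k(k-1)\zeta_k p_{k-2}(x)$ and shifts indices at the end; the resulting recursion is identical.
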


\begin{proof}
Explicitly, eq. \eqref{discr} reads
\begin{align*}
&\sum_{k=0}^{\infty}k(k-1)\zeta_kp_{k-2}(x)
+\left(\sum_{k=0}^{\infty}a_kp_k(x)\right)*
\left(\sum_{k=0}^{\infty} k\zeta_kp_{k-1}(x)\right)\notag\\
&+\left(\sum_{k=0}^{\infty}b_kp_k(x)\right)*
\left(\sum_{k=0}^{\infty} \zeta_kp_k(x)\right)=0.
\end{align*}
Using eq. \eqref{starproduct}, it can be written as
\beq \nn \label{diff3}
\sum_{k=0}^{\infty}k(k-1)\zeta_kp_{k-2}(x)
+\sum_{k,l=0}^{\infty}k a_l\zeta_kp_{k+l-1}(x)
+ \sum_{k,l=0}^{\infty}b_l\zeta_kp_{k+l}(x) =0.
\eeq
After some simplification, we obtain
\beq
\sum_{k=0}^{\infty}\left[(k+2)(k+1)\zeta_{k+2}
+  \sum _{m=0}^k \left[(m+1) a_{k-m} \zeta_{m+1} 
+  b_{k-m}\zeta_{m}\right]\right]p_{k}(x) =0.
\eeq
This relation implies the recurrence \eqref{eq:3.8}.
\end{proof}

\begin{remark}
As a direct consequence of the previous Lemma, the class of analytic solutions of eq. \eqref{cont} are inherited by every discrete representative of eq. \eqref{discr}, in the sense that every solution of the form \eqref{eq:3.3} of eq. \eqref{cont} defines a solution of the form \eqref{eq:3.4} of eq. \eqref{discr}. 
\end{remark}

\subsection{Difference equations}

The family of systems \eqref{discr} can also be represented in the form of a standard difference equation,  once a difference delta operator $Q$ is chosen, for a dependent variable $u_n$ defined on a suitable lattice of points $\mathcal{L}$ 
given by the set of zeros of the basic polynomials associated with $Q$. 

In the subsequent considerations, we will focus on the case $Q=\Delta^{+}$. Thus, the lattice $\mathcal{L}^{+}$ is defined by the set of points $x=n$, $n=0,1,2,\ldots$. We introduce the function
\beq 
u_n:=\sum_{k=0}^{n} \zeta_k p_k^{+}(n)=  \sum_{k=0}^{n} \zeta_k \frac{n!}{(n-k)!}.
\eeq
This relation can be inverted \cite{PTJDE}:
\beq \label{inv}
\zeta_k= \sum_{j=0}^{k}\frac{(-1)^{k-j}}{j!(k-j)!}u_j .
\eeq
From the recurrence \eqref{diff3} we obtain:
\beq\label{eq1}
\sum_{k=0}^{\infty}\left[k(k-1) p_{k-2}^{+}(n)+ k\sum_{l=0}^{\infty} a_l p_{k+l-1}^{+}(n) +\sum_{l=0}^{\infty}b_l p_{k+l}^{+}(n)\right]\zeta_k =0,
\eeq
that is\footnote{In these finite sums we keep the upper limit equal to infinity to simplify the notation, taking into account that factorials of negative integers in the denominator yield vanishing addends.}:
\beq \label{eq:3.13}
\sum_{k=0}^{\infty}\left[\sum_{j=0}^{k}\left(\frac{n! \, k(k-1) }{(n+2-k)!}+  \sum_{l=0}^{\infty}\left[ \frac{n!\, k a_l}{(n-k-l+1)!}  + \frac{n!\, b_l }{(n-k-l)!}\right]\right)\frac{(-1)^{k-j}}{j!(k-j)!}u_j \right]=0.
\eeq
Let us introduce the coefficient $T_{kj}$, defined by
\beq\label{coeT}
T_{kj}:=\left(\frac{n!\,k(k-1) }{(n+2-k)!}+  \sum_{l=0}^{\infty}\left[ \frac{n!\,k a_l}{(n-k-l+1)!}  + \frac{n!\,b_l }{(n-k-l)!}\right]\right)\frac{(-1)^{k-j}}{j!(k-j)!}.
\eeq
Thus, eq. \eqref{eq:3.13} reads
\beq \label{eq:3.15}
\sum_{k=0}^{\infty}\sum_{j=0}^{k}T_{kj}u_j=0.
\eeq
By exchanging the indices, we get a linear difference equation for $u_j$ 
\beq \label{eq:3.16}
\sum_{j=0}^{\infty}\Bigg(\sum_{k=j}^{\infty}T_{kj}\Bigg)u_j=\sum_{j=0}^{\infty}C_ju_j=0,\quad C_j:= \sum_{k=j}^{\infty}T_{kj},
\eeq
where all sums have only a finite number of nonvanishing addends.
\subsubsection{Coefficients $C_j$}
The coefficients $C_j$ in eq. \eqref{eq:3.16} explicitly read
\beq
C_{j}=\sum_{k=j}^{\infty}\left(\frac{n!k(k-1) }{(n+2-k)!}+  \sum_{l=0}^{\infty}\left[ \frac{n!k \, a_l}{(n-k-l+1)!}  + \frac{n!\,b_l }{(n-k-l)!}\right]\right)\frac{(-1)^{k-j}}{j!(k-j)!}.
\eeq
The first term in this expression can be easily simplified\footnote{We use the following identities, which can be easily proved:
\[\sum _{k=0}^n (-1)^k \binom{n}{k}=\delta _{n,0},\quad \sum _{k=0}^n (-1)^k k\binom{n}{k}=-\delta _{n,1},\quad \sum _{k=0}^n (-1)^k k^2 \binom{n}{k}=2\delta _{n,2}-\delta _{n,1}.\]}. We have
\begin{align}
\sum_{k=j}^{n+2}\frac{n!k(k-1) }{(n+2-k)!}&\frac{(-1)^{k-j}}{j!(k-j)!}  =\frac{1}{(n+1)(n+2)}
\binom{n+2}{j} \sum_{k=j}^{n+2}(-1)^{k-j}k(k-1)\binom{n+2-j}{k-j}\notag\\ = \nn & \frac{1}{(n+1)(n+2)}
\binom{n+2}{j}[(n+1)(n+2)\delta _{n+2,j}-2(n+1) \delta_{n+1,j}+2\delta_{n,j}]. \label{first}
\end{align}
Thus, taking the sum in $j$, we obtain the relation
\beq
\sum_{k=j}^{n+2}\frac{n!k(k-1) }{(n+2-k)!}\frac{(-1)^{k-j}}{j!(k-j)!} = \delta _{n+2,j}-2 \delta_{n+1,j}+\delta_{n,j}.
\eeq

As for the second term, we have
\beq
\sum_{k=j}^{\infty}  \sum_{l=0}^{\infty} \frac{n!k \,a_l}{(n-k-l+1)!}  \frac{(-1)^{k-j}}{j!(k-j)!}=\sum_{l=0}^{\infty}a_l\sum_{k=j}^{n-l+1}   \frac{n!k}{(n-k-l+1)!}  \frac{(-1)^{k-j}}{j!(k-j)!}.
\eeq
Then, the expression
\beq
\sum_{k=j}^{n-l+1}   \frac{n!k}{(n-k-l+1)!}  \frac{(-1)^{k-j}}{j!(k-j)!}
\eeq
can be simplified as
\begin{align}
&\frac{n!}{(n-l+1)!}\binom{n-l+1}{j}\sum_{k=j}^{n-l+1}(-1)^{k-j}\binom{n-l+1-j}{k-j}k
= \frac{n!}{(n-l)!}(\delta_{n-l+1,j}-\delta_{n-l,j}).
\end{align}
Therefore, the second term can be rewritten as
\beq \label{eq:3.23}
\sum_{l=0}^{\infty}\frac{n!\, a_l}{(n-l)!}(\delta_{n-l+1,j}-\delta_{n-l,j}).
\eeq
The third term reads
\beq
\sum_{l=0}^{\infty}b_l\sum_{k=j}^{n-l} \frac{n! }{(n-k-l)!}\frac{(-1)^{k-j}}{j!(k-j)!},
\eeq
and taking into account that
\begin{align}
\sum_{k=j}^{n-l} \frac{n!}{(n-l-k)!}\frac{(-1)^{k-j}}{j!(k-j)!} & =\frac{n!}{(n-l)!}\binom{n-l}{j}\sum_{k=j}^{n-l} (-1)^{k-j}\binom{n-l-j}{j-k}\notag\\ \nn  = & \frac{n!}{(n-l)!}\binom{n-l}{j}\delta_{n-l,j} =\frac{n!}{(n-l)!}\delta_{n-l,j},
\end{align}
it assumes the simpler form
\beq \label{eq:3.26}
\sum_{l=0}^{\infty}\frac{n!\, b_l}{(n-l)!}\delta_{n-l,j}.
\eeq
Finally, collecting together the previous results, we have proved that eq. \eqref{eq:3.15} can be rewritten as
\begin{align}
 \sum_{j=0}^{\infty}C_{j}u_j=&\sum_{j=0}^{\infty}\left(\delta _{n+2,j}-2 \delta_{n+1,j}+\delta_{n,j}\right)u_j+  \sum_{l=0}^{\infty} \sum_{j=0}^{\infty}\frac{n! \, a_l}{(n-l)!}(\delta_{n-l+1,j}-\delta_{n-l,j})u_j\notag \\ & + \sum_{l=0}^{\infty}\sum_{j=0}^{\infty}\frac{n!\, b_l}{(n-l)!}\delta_{n-l,j}u_j=0. 
\end{align}
As a result of the latter analysis and of Lemma \ref{Lema1}, we have proved the following theorem.
\begin{theorem} \label{Theorem1}
Let us define the family of difference equations
\beq \label{fam+}
u_{n+2}-2 u_{n+1}+u_{n}+  \sum_{l=0}^{n}\frac{n!}{(n-l)!}\left[a_l u_{n-l+1}-(a_l- b_l)u_{n-l}\right]=0,
\eeq
where $n\in \mathbb{N}$ and $a_l$, $b_l\in \mathbb{R}$ are the coefficients of the series expansions \eqref{eq:3.2}. Then, the function
$
u(x)=\sum_{k=0}^{\infty}\zeta_kx^k
$
is a solution of the differential equation \eqref{cont} if and only if the series
$
u_n=\sum_{k=0}^n\zeta_k \frac{n!}{(n-k)!}
$
is a solution of the difference equation \eqref{fam+}. 
\end{theorem}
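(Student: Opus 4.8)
The plan is to route the equivalence through the single recursion \eqref{eq:3.8}, showing that it governs both sides of the stated biconditional, and to treat the passage between the sequences $\{\zeta_k\}$ and $\{u_n\}$ as an invertible change of variables. On the continuous side, I would substitute $u(x)=\sum_k \zeta_k x^k$ into \eqref{cont}, use $u'=\sum_k k\zeta_k x^{k-1}$ and $u''=\sum_k k(k-1)\zeta_k x^{k-2}$, Cauchy-multiply the expansions \eqref{eq:3.2} of $A$ and $B$ against $u'$ and $u$ respectively, and collect the coefficient of $x^k$; demanding that each such coefficient vanish yields exactly \eqref{eq:3.8}. Hence $\sum_k\zeta_k x^k$ solves \eqref{cont} if and only if $\{\zeta_k\}$ obeys \eqref{eq:3.8}. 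By Lemma \ref{Lema1} this is the very same recursion that characterizes when $\sum_k\zeta_k p_k(x)$ is a formal solution of the discrete model \eqref{discr}: the two coefficient problems coincide because the umbral replacement $x^k\mapsto p_k(x)$, $D\mapsto Q$, $\cdot\mapsto *_Q$ preserves the recursion term by term.

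It then remains to prove that, with $u_n$ defined from $\{\zeta_k\}$ by the forward transform, $\{u_n\}$ solves \eqref{fam+} if and only if $\{\zeta_k\}$ obeys \eqref{eq:3.8}. I would specialize to $Q=\Delta^+$ and use that on the lattice $\mathcal{L}^+=\{0,1,2,\dots\}$ the basic polynomials become the falling factorials $p_k^+(n)=n!/(n-k)!$, which vanish for $k>n$ and thereby truncate every formally infinite sum to a finite one (the footnote convention). The key structural fact is that the defining relation $u_n=\sum_{k=0}^n \zeta_k\,n!/(n-k)!$ and the inversion \eqref{inv} are mutually inverse lower-triangular transforms with non-vanishing diagonal, indeed binomial inversion after the rescaling $\zeta_k\mapsto k!\,\zeta_k$; this makes $\{\zeta_k\}\leftrightarrow\{u_n\}$ a bijection respecting the two-dimensional initial data $(\zeta_0,\zeta_1)\leftrightarrow(u_0,u_1)$, so that neither implication can lose solutions. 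Concretely, I would evaluate the recurrence \eqref{diff3} at $x=n$, obtaining \eqref{eq1}, replace $\zeta_k$ by \eqref{inv}, and reorganize the double sum through the coefficient $T_{kj}$ of \eqref{coeT}, exchanging the order of summation to reach the difference operator $\sum_j C_j u_j$ of \eqref{eq:3.16}. Because this chain of substitutions is reversible through the bijection, $\sum_j C_j u_j=0$ holds for all $n$ if and only if each coefficient of $p_k(x)$ in \eqref{diff3} vanishes, i.e. if and only if \eqref{eq:3.8} holds.

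The technical heart, and the step I expect to be the main obstacle, is the explicit evaluation of $C_j=\sum_{k\ge j}T_{kj}$: the non-local $*$-product contributions $F*Qu$ and $G*u$ must collapse into the strictly local, finite expression displayed in \eqref{fam+}. This reduces to the three alternating binomial identities recorded in the footnote, $\sum_k(-1)^k\binom{n}{k}=\delta_{n,0}$ together with its weighted versions, which convert each inner sum over $k$ into a combination of Kronecker deltas $\delta_{n,j}$, $\delta_{n\pm1,j}$, $\delta_{n-l,j}$ and $\delta_{n-l+1,j}$. The care required lies in the bookkeeping of the shifted arguments $n-l$ and $n-l+1$ and in keeping track of which terms survive after truncation; this is precisely what produces the second-difference block $u_{n+2}-2u_{n+1}+u_n$ together with the drift contributions $a_l u_{n-l+1}-(a_l-b_l)u_{n-l}$. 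Once these collapses are verified, $\sum_j C_j u_j=0$ is literally \eqref{fam+}, and chaining the two equivalences through \eqref{eq:3.8} completes the proof.
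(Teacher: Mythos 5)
Your proposal follows essentially the same route as the paper: Lemma \ref{Lema1} identifies the recursion \eqref{eq:3.8} common to the continuous and discrete coefficient problems, and the passage to \eqref{fam+} is carried out exactly as in the text, by evaluating \eqref{diff3} on the lattice, substituting the inversion \eqref{inv}, and collapsing $C_j=\sum_{k\ge j}T_{kj}$ via the alternating binomial identities of the footnote. Your explicit appeal to the lower-triangular, nonvanishing-diagonal nature of the transform $\{\zeta_k\}\leftrightarrow\{u_n\}$ is a welcome touch that the paper leaves implicit but which is genuinely needed for the ``only if'' direction.
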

As a consequence, we say that eq. \eqref{fam+} represents the discrete counterpart of eq. \eqref{cont} on the lattice $\mathcal{L}^{+}$. The equation \eqref{fam+} can also be seen as the realization of the abstract equation \eqref{discr} on the Rota algebra $(\mathcal{F}, +,\cdot, *_{\Delta^{+}})$. A different realization will be discussed in Section \ref{sec:6}.
Hereafter, for the sake of concreteness, we will list some basic examples of discrete models arising from eq. \eqref{fam+}.

\subsection{A constant coefficient model}

The constant coefficient equation
\beq \label{eq:3.29}
u''+\alpha u'+\beta u=0
\eeq
with  $a_0=\alpha$, $a_l=0,\; l\ge1$, $b_0=\beta$,  $b_l=0,\; l\ge1$
provides us with the difference equation
\beq
u_{n+2}+(\alpha-2)u_{n+1}-(\alpha- \beta-1)u_{n}=0
\eeq
whose independent solutions, inherited from those of eq.\eqref{eq:3.29}  are
\[
u_n^{(1)}=\sum_{k=0}^{n}\frac{(\sqrt{\alpha^2-4\beta}-\alpha)^{k}}{2^k k!} \frac{n!}{(n-k)!}, \quad u_n^{(2)}=\sum_{k=0}^{n}\frac{(-1)^k(\sqrt{\alpha^2-4\beta}+\alpha)^{k}}{2^k k!} \frac{n!}{(n-k)!}.
\]

\subsection{The Airy equation} One of the most interesting examples of the application of the above theory is the Airy equation
\beq \label{Airy}
u''-xu=0
\eeq
which, in quantum mechanics, represents the Schr\"odinger equation for a particle in a one-dimensional constant force field.

The discrete version of it is readily obtained from eq.\eqref{fam+} by taking $a_l=0,\; l\ge0$, $b_0=0$, $b_1=-1$,  $b_l=0,\; l\ge2$. We immediately obtain an interesting discrete model.
\begin{definition}
The difference equation
\beq \label{ADE}
u_{n+2}-2 u_{n+1}+u_{n}-n u_{n-1} =0, \qquad n\in \mathbb{N}
\eeq
will be said to be the discrete Airy equation on $\mathcal{L}^{+}$.
\end{definition}
The following discrete Airy functions provide the two independent solutions of the Airy difference equation \eqref{ADE}:
\[
Ai(n)=\sum_{k=0}^{n}\frac{1}{3^k k!2\cdot 5 \cdots (3k-1)} \frac{n!}{(n-3k)!}, \]
\[ \quad Bi(n)=\sum_{k=0}^{n} \frac{1}{3^{k} k! 4\cdot 7 \cdots (3k+1)} \frac{n!}{(n-(3k+1))!}.
\]

\subsection{The Hermite equation}
The Hermite equation
\beq \label{Hermite}
u''-2xu'+2\lambda u=0
\eeq
can be discretized by taking $a_0=0$, $a_1=-2$,  $a_l=0,\; l\ge2$, $b_0=2\lambda$,   $b_l=0,\; l\ge1$. 
\begin{definition}
The Hermite difference equation is
\beq\label{HDE}
u_{n+2}-2 u_{n+1}+(1+2\lambda -2  n)u_{n}+2n u_{n-1}=0.
\eeq
\end{definition}
By way of an example, we shall consider the Hermite polynomials $H_N(x)$ which are exact solutions of the Hermite equation for $\lambda=N$, being $N$ a nonnegative integer. For instance, for $N=3$, the Hermite polynomial
\beq
H_3(x)=8x^3-12x
\eeq
corresponds to the function
\beq
u_n=8p_3^{+}(n)-12p_1^{+}(n)=  8n^3-24n^2+4n,
\eeq
which is an exact solution of the Hermite difference equation \eqref{HDE} for $\lambda=3$ :
\beq
 u_{n+2}-2u_{n+1}+(7 - 2n )u_n  +2n u_{n-1}=0.
\eeq

\subsection{Continuum limit}

We shall discuss the continuum limits of the large family of discrete models introduced in this section. 
To this aim, we will introduce a step size $h$ and define a regular mesh of points on $\mathbb{R}_{\geq 0}$, indexed by $nh$, $n\in \mathbb{N}$. We will consider the double limit $h\to 0$ and $n\to \infty$ while keeping $nh$ finite. The basic polynomials associated with $\Delta_h= \frac{T^h-1}{h}$ are now
$
p_k^{+}(x)=\prod_{j=0}^{k-1}(x-jh).
$
Thus,
\beq
p_k^{+}(nh)=h^k\prod_{j=0}^{k-1}(n-j)=\frac{n!h^k}{(n-k)!}.
\eeq
Consequently, the coefficients $\zeta_k$ are given by:
\beq
\zeta_k=\frac{1}{h^k}\sum_{j=0}^k\frac{(-1)^{k-j}}{j!(k-j)!}u_j,
\eeq
with 
\beq
u_n=u(nh)=\sum_{k=0}^n\zeta_kp_k^{+}(nh)=\sum_{k=0}^n\frac{n!h^k\zeta_k}{(n-k)!}.
\eeq
The action of $\Delta_h$ on a solution is expressed by
\begin{align}
\Delta_h u =&\frac{1}{h}(u_{n+1}-u_n)= \frac{1}{h}\left(\sum_{k=0}^{n+1}\frac{(n+1)!h^k\zeta_k}{(n+1-k)!}-\sum_{k=0}^n\frac{n!h^k\zeta_k}{(n-k)!}\right) \\ =& \sum_{k=0}^{n+1}h^{k-1}\zeta_k\left(\frac{(n+1)! }{(n+1-k)!}- \frac{n! }{(n-k)!}\right)= \sum_{k=0}^{n+1}k\zeta_k  \frac{ n!h^{k-1}  }{(n-(k-1))!} = \sum_{k=0}^{n+1}k\zeta_k  p_{k-1}^{+}(nh)\notag .
\end{align}
Consequently,
\begin{align}
\Delta_h^2 u =&\frac{1}{h^2}(u_{n+2}-2u_{n+1}+u_n)=  \sum_{k=0}^{n+2}k(k-1)\zeta_k  p_{k-2}^{+}(nh).
\end{align}
The equation \eqref{fam+} on our mesh takes the form
\beq
\frac{1}{h^2}(u_{n+2}-2 u_{n+1}+u_{n})+  \sum_{l=0}^{n}\frac{n!h^l}{(n-l)!}\left[\frac{a_l}{h}(u_{n-l+1}-u_{n-l})+ b_l \, u_{n-l}\right]=0,
\eeq
which reproduces eq. \eqref{cont} in the continuum limit. For instance, the difference Airy equation becomes
\beq
\frac{1}{h^2}(u_{n+2}-2 u_{n+1}+u_{n})-nh  u_{n-1} =0,
\eeq
whereas the difference Hermite equation reads
\beq
\frac{1}{h^2}(u_{n+2}-2 u_{n+1}+u_{n})+2\lambda u_{n}-2 n (u_{n}-u_{n-1})=0.
\eeq
In the continuum limit, they coincide with the differential equations \eqref{Airy} and \eqref{Hermite} respectively.

\section{Frobenius theory for regular singular  points} \label{sec:4}

\subsection{General approach}
In this section, we shall extend the results of the previous analysis to the more general case of differential equations of the form \eqref{cont} possessing regular singular points. To this aim, we first rewrite eq. \eqref{cont} as
\beq \label{eq:4.1}
x^2 u''+ R(x)u'+S(x)u=0,
\eeq
where the functions $R(x) := x^2     \, A(x)$ and $S(x) := x^2 \, B(x)$ are now analytic. In order to restraint the singularity of $A(x)$ to at most a simple pole, we have to impose $R(0)=0$. Thus, we have
\beq \label{eq:4.2}
R(x)=\sum_{k=1}^{\infty}r_kx^k,\quad S(x)=\sum_{k=0}^{\infty}s_kx^k. 
\eeq
According to the classical Frobenius theory, there exists a solution of  eq. \eqref{eq:4.1} of the form
\beq
u(x) = x^\lambda\sum_{k=0}^{\infty}\tilde{\zeta}_k x^k,
\eeq
where $\lambda$ is a root of the indicial polynomial
\beq
\lambda(\lambda-1)+r_1\lambda+s_0.
\eeq
We shall restrict our analysis to the case of real roots, where the highest one is a non-negative integer. Under these hypotheses, the solution can be written as
\beq
u(x) = \sum_{k=0}^{\infty}\tilde{\zeta}_k x^{\lambda+k}
= \sum_{k=\lambda}^{\infty}\tilde{\zeta}_{k-\lambda} x^{k}=\sum_{k=0}^{\infty}\zeta_k x^{k},
\eeq
where the equation will force $\zeta_k=0$ for $k=0,\ldots,\lambda-1$.
Then we proceed as in the regular case.

By analogy with the previous discussion, we introduce the formal series \eqref{eq:3.4},  (assuming $r_0=0$). We formally  define the dynamical system
\beq \label{discr2}
p_2(x)* Q^2 u(x) +V(x)* Q u(x) +W(x)*u(x)=0
\eeq
on the abstract Rota algebra $(\mathcal{F}, +,\cdot, *_{Q})$, where $V(x)= \sum_{l=1}^{\infty}r_lp_l(x)$, $W(x)=\sum_{l=0}^{\infty}s_lp_l(x)$. Explicitly,  eq. \eqref{discr2} has the form
\begin{align}\label{diff02}
&p_2(x)*\left(\sum_{k=0}^{\infty}k(k-1)\zeta_kp_{k-2}(x)\right) +\left(\sum_{l=1}^{\infty}r_lp_l(x)\right)*\left(\sum_{k=0}^{\infty}k\zeta_kp_{k-1}(x)\right)+\notag \\ & \qquad \left(\sum_{l=0}^{\infty}s_lp_l(x)\right)*\left(\sum_{k=0}^{\infty}\zeta_kp_{k}(x)\right)=0,
\end{align}
or, equivalently
\beq \label{eq:4.7}
\sum_{k=0}^{\infty}k(k-1)\zeta_{k}p_{k}(x)+ \sum_{k=0}^{\infty}\sum_{l=0}^{\infty}kr_l\zeta_{k}p_{k+l-1}(x) +\sum_{k=0}^{\infty}\sum_{l=0}^{\infty}s_l\zeta_{k}p_{k+l}(x) =0.
\eeq
By shifting the indices, we have
\beq
\sum_{k=0}^{\infty}\left\{k(k-1)\zeta_{k } +  \sum_{l=0}^{k}[(k-l)r_{l+1} +s_l]\zeta_{k-l }\right\}p_{k}(x)=0.
\eeq
We obtain the recurrence equation
\beq\label{zeta}
[k(k-1)+kr_{1} +s_{0}]\zeta_{k } +  \sum_{l=1}^{k }[(k-l)r_{l+1} +s_l]\zeta_{k-l }=0,\quad k=0, 1,\ldots
\eeq
which implies the relations
\begin{align*}
s_0 \zeta_{0}=&0\\
(r_{1} +s_0)\zeta_{1}+
s_1\zeta_{0}=&0\\
 (2+2r_{1} +s_0)\zeta_{2}+(r_{2} +s_1)\zeta_{1}+s_2\zeta_{0}=&0 \\ \ldots
\end{align*}

\subsection{Difference equation for $u_n$}
The difference equation for $u_n=\sum_{k=0}^{n} \zeta_k p_k(n)$ is obtained from the relation \eqref{eq:4.7}, rewritten as
\beq
\sum_{k=0}^{\infty}\left[k(k-1) p_{k}(n)+ k\sum_{l=0}^{\infty} r_l p_{k+l-1}(n) + \sum_{l=0}^{\infty}s_l p_{k+l}(n)\right]\zeta_k =0.
\eeq
For the case $Q=\Delta^{+}$, by using the inversion formula \eqref{inv} and  the basic polynomials $p_k^{+}(n)$,
we obtain
\beq
\sum_{k=0}^{\infty}\left\{\sum_{j=0}^{k}\left(\frac{n!k(k-1) }{(n-k)!}+  \sum_{l=0}^{\infty}\left[ \frac{n!kr_l}{(n-k-l+1)!}  + \frac{n!s_l }{(n-k-l)!}\right]\right)\frac{(-1)^{k-j}}{j!(k-j)!}u_j \right\}=0.
\eeq
By analogy with the analysis for regular points, we introduce 
\beq
T^s_{kj}=\left(\frac{n!k(k-1) }{(n-k)!}+  \sum_{l=0}^{\infty}\left[ \frac{n!kr_l}{(n-k-l+1)!}  + \frac{n!s_l }{(n-k-l)!}\right]\right)\frac{(-1)^{k-j}}{j!(k-j)!}.
\eeq
Then,
\beq
\sum_{k=0}^{\infty}\sum_{j=0}^{k}T_{kj}u_j=0.
\eeq
By exchanging the indices, we get the linear difference equation for $u_j$
\beq \label{eq:4.18}
\sum_{j=0}^{\infty}\Bigg(\sum_{k=j}^{\infty}T^s_{kj}\Bigg)u_j=\sum_{j=0}^{\infty}C^s_ju_j=0,\quad C^s_j= \sum_{k=j}^{\infty}T^s_{kj}, 
\eeq
where
\beq
C^s_{j}=\sum_{k=j}^{\infty}\left(\frac{n!k(k-1) }{(n-k)!}+  \sum_{l=0}^{\infty}\left[ \frac{n!kr_l}{(n-k-l+1)!}  + \frac{n!s_l }{(n-k-l)!}\right]\right)\frac{(-1)^{k-j}}{j!(k-j)!}.
\eeq
The first addend can be written as
\begin{align}
 \sum_{k=j}^{n}\frac{n!k(k-1) }{(n-k)!}\frac{(-1)^{k-j}}{j!(k-j)!}=&
\binom{n}{j} \sum_{k=j}^{n}(-1)^{k-j}k(k-1)\binom{n-j}{k-j}\notag\\ \nn = &\binom{n}{j} (n(n-1) \delta_{n,j}-2 (n-1) \delta_{n-1,j}+2 \delta_{n-2,j}) \\ = & \,  n(n-1) (\delta_{n,j}-2 \delta_{n-1,j} +2 \delta_{n-2,j}). 
\end{align}
The second and third addend coincide with   eq. \eqref{eq:3.23} and \eqref{eq:3.26}, respectively. The difference equation \eqref{eq:4.18} now becomes
\begin{align}
 \sum_{j=0}^{\infty}C_{j}u_j=&\sum_{j=0}^{\infty}n(n-1)\left( \delta_{n,j}-2 \delta_{n-1,j} +2 \delta_{n-2,j}\right)u_j+  \sum_{l=0}^{\infty} \sum_{j=0}^{\infty}\frac{n!r_l}{(n-l)!}(\delta_{n-l+1,j}-\delta_{n-l,j})u_j\notag \\ \nn & + \sum_{l=0}^{\infty}\sum_{j=0}^{\infty}\frac{n!s_l}{(n-l)!}\delta_{n-l,j}u_j. \end{align}
The latter discussion can be summarised in terms of the following
\begin{theorem} \label{Theorem2}
Consider the class of discrete dynamical systems
\beq \label{discrRS}
n(n-1)(u_{n}-2 u_{n-1}+u_{n-2})+  \sum_{l=0}^{n}\frac{n!}{(n-l)!}\left[r_lu_{n-l+1}-(r_l- s_l)u_{n-l}\right]=0,
\eeq
where $n\in \mathbb{N}$ and $r_l\in \mathbb{R}$, $s_l\in \mathbb{R}$ are the coefficients of the series expansions \eqref{eq:4.2}. Then, the function
$
u(x)=\sum_{k=0}^{\infty}\zeta_kx^k
$
is a solution of the differential equation \eqref{eq:4.2} if and only if the series
$
u_n=\sum_{k=0}^n\zeta_k \frac{n!}{(n-k)!}
$
is a solution of the difference equation \eqref{discrRS}.
\end{theorem}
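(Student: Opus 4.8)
The plan is to show that both the continuous condition and the discrete condition are equivalent to one and the same coefficient recurrence \eqref{zeta}, and then to invoke the invertibility of the transformation $\{\zeta_k\}\leftrightarrow\{u_n\}$. Concretely, the logical chain I would establish is: $u(x)=\sum_k\zeta_k x^k$ solves \eqref{eq:4.1} $\iff$ the $\zeta_k$ satisfy \eqref{zeta} $\iff$ $u_n=\sum_{k=0}^n\zeta_k\frac{n!}{(n-k)!}$ solves \eqref{discrRS}. Since both equivalences pass through \eqref{zeta}, the biconditional asserted in the theorem follows.

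For the first equivalence I would substitute the power series $u(x)=\sum_k\zeta_k x^k$ into $x^2u''+R(x)u'+S(x)u=0$, with $R(x)=\sum_{l\ge1}r_lx^l$, $S(x)=\sum_{l\ge0}s_lx^l$ and $r_0=0$. Using $x^2\cdot x^{k-2}=x^k$, the three terms become $\sum_k k(k-1)\zeta_k x^k$, $\sum_{k,l}kr_l\zeta_k x^{k+l-1}$ and $\sum_{k,l}s_l\zeta_k x^{k+l}$; this is exactly \eqref{eq:4.7} under the specialization $p_k\mapsto x^k$ (that is, $Q=D$). Collecting the coefficient of each power $x^k$ and equating it to zero reproduces \eqref{zeta}. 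This is the classical Frobenius argument and needs no new ingredient.

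For the second equivalence I would start from the abstract equation \eqref{discr2} on the Rota algebra, expand each factor in the basic polynomials $p_k^+$, and apply the $*$-product rule \eqref{starproduct}; here the left multiplication by $p_2(x)$ lifts the index of $Q^2u$ back to $p_k$, yielding \eqref{eq:4.7} in the basis $\{p_k^+(n)\}$. Passing from $\zeta_k$ to $u_j$ by the inversion formula \eqref{inv} and collecting the coefficients $C^s_j=\sum_{k\ge j}T^s_{kj}$, the second and third addends coincide verbatim with \eqref{eq:3.23} and \eqref{eq:3.26}, so the only real work is the leading addend.

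I expect the leading addend to be the main obstacle. In the regular case of Theorem \ref{Theorem1} the term $Q^2u$ contributed $p_{k-2}$ and produced the symmetric pattern $\delta_{n+2,j}-2\delta_{n+1,j}+\delta_{n,j}$; here the prefactor $p_2*$ sends $Q^2u$ to $p_k$, so the sum to evaluate is $\binom{n}{j}\sum_{k=j}^{n}(-1)^{k-j}k(k-1)\binom{n-j}{k-j}$. I would handle it by writing $k(k-1)=(k-j)^2+(2j-1)(k-j)+j(j-1)$ and applying the three alternating-sum identities $\sum_p(-1)^p\binom{m}{p}p^i$, $i=0,1,2$, term by term; the result is $n(n-1)\bigl(\delta_{n,j}-2\delta_{n-1,j}+2\delta_{n-2,j}\bigr)$, whose asymmetric coefficients $1,-2,+2$ are the genuinely new feature of the singular case. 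Feeding this, together with \eqref{eq:3.23} and \eqref{eq:3.26}, into $\sum_j C^s_j u_j=0$ and performing the sums over $j$ collapses all Kronecker deltas and delivers precisely \eqref{discrRS}. Finally, since $u_n=\sum_{k=0}^n\zeta_k\frac{n!}{(n-k)!}$ is triangular with nonzero diagonal entries $n!\neq0$ and is inverted by \eqref{inv}, the correspondence $(\zeta_0,\dots,\zeta_n)\mapsto(u_0,\dots,u_n)$ is a bijection; hence $\{u_n\}$ satisfies \eqref{discrRS} if and only if $\{\zeta_k\}$ satisfies \eqref{zeta}, completing the chain of equivalences.
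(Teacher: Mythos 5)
Your strategy is exactly the paper's: route both sides of the biconditional through the recurrence \eqref{zeta}, reuse \eqref{eq:3.23} and \eqref{eq:3.26} for the second and third addends, and handle the leading addend by the decomposition $k(k-1)=(k-j)^2+(2j-1)(k-j)+j(j-1)$ together with the three alternating binomial identities, closing with the triangular invertibility of $\zeta_k\leftrightarrow u_n$. However, your evaluation of the leading addend contains an arithmetic slip that makes your intermediate claim inconsistent with your conclusion. Before multiplying by $\binom{n}{j}$ the alternating sum equals $n(n-1)\delta_{n,j}-2(n-1)\delta_{n-1,j}+2\,\delta_{n-2,j}$, but after multiplying by $\binom{n}{j}$ the three deltas pick up the \emph{different} values $\binom{n}{n}=1$, $\binom{n}{n-1}=n$, $\binom{n}{n-2}=\tfrac{n(n-1)}{2}$, so the correct result is
\begin{equation*}
\binom{n}{j}\sum_{k=j}^{n}(-1)^{k-j}k(k-1)\binom{n-j}{k-j}
= n(n-1)\left(\delta_{n,j}-2\,\delta_{n-1,j}+\delta_{n-2,j}\right),
\end{equation*}
with coefficient $1$, not $2$, on $\delta_{n-2,j}$ (check $n=2$, $j=0$: the left side equals $2$, not $4$). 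Your stated value $n(n-1)(\delta_{n,j}-2\delta_{n-1,j}+2\delta_{n-2,j})$ would produce $n(n-1)(u_n-2u_{n-1}+2u_{n-2})$ and hence \emph{not} \eqref{discrRS}; for the zero-order Bessel data $r_1=1$, $s_2=1$ it would yield $nu_n-(2n-1)u_{n-1}+3(n-1)u_{n-2}=0$ instead of \eqref{bes0disc}, which the exact solution does not satisfy. So the ``asymmetric coefficients $1,-2,+2$'' you single out as the new feature of the singular case are not there: the second-difference pattern remains the symmetric $1,-2,1$, only shifted downward and multiplied by the prefactor $n(n-1)$. With this coefficient corrected, your argument matches the paper's proof step for step and delivers the theorem as stated.
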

Thus the family of difference equations \eqref{discrRS} can be regarded as the discrete version of the ODEs \eqref{eq:4.1}. The study of the continuum limit of eq. \eqref{discrRS} is perfectly analogous to the previous case and is left to the reader.

\section{A discrete Bessel equation} \label{sec:5}
The theoretical framework developed above allows us to introduce, for example, a natural discretization for the Bessel equation. Due to the relevance of the Bessel equation in many application contexts, we shall discuss  its discretization in detail here. 

\subsection{The zero-order  Bessel equation}
Let us discuss first the simplest example of the zero-order Bessel differential equation, which reads
\beq
x^2u''+xu'+x^2u=0
\eeq
and admits, as an independent solution, the zero-order Bessel function
 \beq\label{bes0}
 J_{0}(x)=\sum_{k=0}^{\infty}\frac{(-1)^{k}}{(k!)^2 2^{2k}}x^{2k}=1-\frac{x^2}{4}+\frac{x^4}{64}-\frac{x^6}{2304} +\cdots
 \eeq
In this case, $R(x)=x$, $S(x)=x^2$. For $n\ge2$, we obtain a remarkable difference equation.
\begin{definition}
The  zero-order discrete Bessel equation is given by
\beq\label{bes0disc}
nu_n- (2n-1)u_{n-1}+2(n-1) u_{n-2}=0.
\eeq
\end{definition}
The discrete version of the Bessel function \eqref{bes0} $J_0(x)$ arising from our theory is
\beq
u_n=\sum_{k=0}^{\infty}\frac{(-1)^{k}}{(k!)^2 2^{2k}}p_{2k}^{+}(n)=\sum_{k=0}^{[n/2]}\frac{(-1)^{k}n!}{2^{2k}(n-2k)!(k!)^2},
\eeq
which represents a solution of the difference equation \eqref{bes0disc}.
Precisely, from eq. \eqref{bes0disc} we obtain the relation
\begin{align}
&n\sum_{k=0}^{\infty}\frac{(-1)^{k}n!}{2^{2k}(n-2k)!(k!)^2 }-(2n-1)\sum_{k=0}^{\infty}\frac{(-1)^{k}(n-1)!}{2^{2k}(n-1-2k)!(k!)^2 }\notag\\  
&+2(n-1) \sum_{k=0}^{\infty}\frac{(-1)^{k}(n-2)!}{2^{2k}(n-2-2k)!(k!)^2 }. 
\end{align}
It can be simplified, leading us to the expression
\beq
\sum_{k=0}^{\infty}\frac{(-1)^k}{2^{2k}n(k!)^2(n-2k)!} [ n^2-(4 k+1) (n-2 k)],
\eeq
which vanishes for $n\ge 2$.

\subsubsection{Continuum limit}
With the same construction as in the case of  regular  points, one can easily derive, in a lattice of step $h$, the expression
\beq
(n+2)^2  u_{n+2}-  2(n+2)^2u_{n+1}+ (n+2)^2u_{n}+( n +2) u_{n+1}-(n+2) u_{n} +  (n+1)(n+2)h^2  u_{n}=0,
\eeq
which can be written as:
\begin{align} 
((n+2) h)^2\frac{ u_{n+2}-  2 u_{n+1}+  u_{n}}{h^2}+( n +2)h\frac{ u_{n+1}- u_{n}}{h} +   (n+1)(n+2)h^2 u_{n}=0.
\end{align}
In the limit ($h\to 0$, $ n\to \infty$ and $nh$ finite) we easily recover the zero-order Bessel equation.

\subsection{The Bessel equation of order $\nu$}

Let us extend the previous discussion to the Bessel equation of order $\nu\in\mathbb{N}\backslash\{0\}$:
\beq\label{besseldif}
x^2u''+xu'+(x^2-\nu^2)u=0.
\eeq
In this case, $R(x)=x$, $S(x)=x^2-\nu^2$. Thus, the non-vanishing coefficients are
\beq
r_1=1,\quad s_0=-\nu^2,\quad s_2=1.
\eeq
Therefore, eq. \eqref{discrRS} converts into a simpler relation.
\begin{definition}
The difference equation
\beq\label{besselnu}
(n^2-\nu^2)u_{n}
-n(2n-1)u_{n-1}
+2n(n-1)u_{n-2} 
=0
\eeq
is said to be the discrete Bessel equation of order  $\nu\in\mathbb{N}\backslash\{0\}$. 
\end{definition}

\subsubsection{Exact solutions}

From the Bessel function  of the first class and order $\nu$
\beq
J_{\nu}(x)=\sum_{k=0}^{\infty}\frac{(-1)^k}{2^{2k+\nu}k!(k+\nu)!}x^{2k+\nu}
\eeq
we can define its discrete counterpart
\beq
u_{n} =\sum_{k=0}^{[(n-\nu)/2]}\frac{(-1)^k}{2^{2k+\nu}k!(k+\nu)!}p_{2k+\nu}^{+}(n),
\eeq
which is an exact solution of \eqref{besselnu}.

\section{Discretization on the real negative axis} \label{sec:6}

The formal approach presented in this work can be easily adapted in order to produce different discretization schemes, related to different choices of delta operators and of the corresponding lattices (i.e., of Rota algebras).

To illustrate this crucial aspect, in this section, we shall briefly discuss the case of $Q=\Delta^{-}=\mathbf{1}-T^{-1}$. The associated basic polynomials $p_n^{-}(x)$ can be  written explicitly as
 \beq
p_0^{-}(x)=1,\quad p_k^{-}(x)=\prod_{j=0}^{k-1}(x+j), \qquad k=0,1,2,\ldots. \eeq
They satisfy the identity
 \[
 p_k^{-}(-x)=\prod_{j=0}^{k-1}(-x+j)=(-1)^k\prod_{j=0}^{k-1}(x-j)=(-1)^k p_k^{+}(x).
 \]
The zeros of these polynomials define a new, uniform lattice, denoted as $\mathcal{L}^{-}$. We shall index its points as $x=-m$, $m=0,1,2,\ldots$. The formal expansion \eqref{eq:3.4} on $\mathcal{L}^{-}$ has the form
\beq
u_{-m}\equiv u(-m)=\sum_{k=0}^{\infty} \zeta_k\, p_k^{-}(-m),
\eeq
which converts into the finite sum
\beq
u_{-m}=\sum_{k=0}^{m} \zeta_k\, p_k^{-}(-m)=\sum_{k=0}^{m}  \frac{(-1)^km!}{(m-k)!}\zeta_k, \qquad m=0,1,2,\ldots
\eeq
The coefficients $\zeta_k$ can be inverted, giving us the expression
\beq
\zeta_k=\sum_{j=0}^{k} \frac{(-1)^{j}}{j!(k-j)!}u_{-j},\quad k=0,1,2,\ldots
\eeq

The Frobenius analysis for this discretization scheme can be performed in complete analogy to the discussion of the previous sections. We report here the main results of this analysis, leaving the technical details to the reader. We distinguish two cases.

\vspace{2mm}

(A) For the regular case, the discretization of eq. \eqref{cont} on $\mathcal{L}^{-}$ provides us with a family of equations analogous to that of eq. \eqref{fam+}:
\begin{definition}
The discrete version of eq. \eqref{cont} on the real negative axis reads
\beq \label{eq:6.5}
u_{-m-2}-2 u_{-m-1}+u_{-m}+\sum_{l=0}^{m}\frac{m!}{(m-l)!}\Big(-a_l u_{-m+l-1}+(a_l+b_l)u_{-m+l}\Big)=0,
\eeq
where $m\in \mathbb{N}$ and $a_l$, $b_l\in \mathbb{R}$ are the coefficients of the series expansions \eqref{eq:3.2}.
\end{definition}

\vspace{2mm}

(B) For the regular singular case, following the same philosophy, the same discretization scheme applied to eq. \eqref{eq:4.1} gives us another interesting dynamical model, which is the companion of the family \eqref{discrRS}, defined on $\mathcal{L}^{-}$.
\begin{definition}
The discrete version of eq. \eqref{eq:4.1} on the real negative axis is the family of models
\beq \label{eq:6.6}
m(m-1)(u_{-m}-2 u_{-m+1}+ u_{-m+2})+\sum_{l=0}^{m}(-1)^{l} \frac{m!}{(m-l)!}\Big(-r_l u_{-m+l-1}+ (r_l+s_l)u_{-m+l}\Big)=0, 
\eeq
where $m\in \mathbb{N}$ and $r_l\in \mathbb{R}$, $s_l\in \mathbb{R}$ are the coefficients of the series expansions \eqref{eq:4.2}.
\end{definition}
By complete analogy with Theorems \ref{Theorem1} and \ref{Theorem2}, we can state a general result for the discretizations on the real negative axis.
\begin{theorem} \label{Theorem3}
The power series $u_{-m}=\sum_{k=0}^{m}  \zeta_k \frac{(-1)^km!}{(m-k)!}$  is an exact solution of eq. \eqref{eq:6.5} (resp. \eqref{eq:6.6})  if and only if the function $u(x)=\sum_{k=0}^{\infty}\zeta_k x^{k}$ is an analytic solution of eq. \eqref{cont} (resp. \eqref{eq:4.2}). 
\end{theorem}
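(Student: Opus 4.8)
The plan is to mirror the two-stage argument behind Theorems \ref{Theorem1} and \ref{Theorem2}, now taking $Q=\Delta^{-}$, and to exploit the fact that the \emph{abstract} part of that argument does not depend on the choice of delta operator. First I would note that the formal models \eqref{discr} and \eqref{discr2} are defined on the abstract Rota algebra $(\mathcal{F},+,\cdot,*_Q)$ and are therefore \emph{literally the same} for $Q=\Delta^{-}$ as for $Q=\Delta^{+}$; only their concrete realization on a lattice differs. Hence Lemma \ref{Lema1} (regular case) and the recurrence \eqref{zeta} (regular singular case) apply without modification, and they force the coefficients $\zeta_k$ of the formal solution $u(x)=\sum_k\zeta_k\,p_k^{-}(x)$ to satisfy exactly the same recursions as the Taylor coefficients of the analytic solution of \eqref{cont} (resp. \eqref{eq:4.2}). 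Since the recurrence governing $\{\zeta_k\}$ is one and the same for the ODE and for the formal discrete model, a sequence solves one if and only if it solves the other; the desired equivalence thus already holds at the level of $\{\zeta_k\}$, and what remains is purely a realization statement.

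In the second stage I would insert $p_k^{-}(-m)=\frac{(-1)^k m!}{(m-k)!}$ together with the inversion formula $\zeta_k=\sum_{j=0}^{k}\frac{(-1)^{j}}{j!(k-j)!}\,u_{-j}$ into \eqref{discr} (resp. \eqref{discr2}), expanded in the basis $\{p_k^{-}\}$, following verbatim the passages \eqref{eq:3.13}--\eqref{eq:3.16} and the parallel computation leading to \eqref{eq:4.18}. This yields coefficients $C_j^{-}$ (resp. their regular-singular analogues) expressed as finite binomial sums of the same shape as in Section \ref{sec:3}, which I would collapse using the elementary binomial identities employed there. The evaluation $p_2^{-}(-m)=m(m-1)$ reproduces the overall prefactor $m(m-1)$ of \eqref{eq:6.6}, exactly as $p_2^{+}(n)=n(n-1)$ produced the prefactor in \eqref{discrRS}; matching the remaining terms then recovers \eqref{eq:6.5} and \eqref{eq:6.6} term by term.

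A cleaner shortcut I would attempt uses the reflection identity $p_k^{-}(-x)=(-1)^k p_k^{+}(x)$ stated above: it lets me transport the coefficient computations already performed on $\mathcal{L}^{+}$ to $\mathcal{L}^{-}$ via $x\mapsto -x$. Under this reflection the factor $(-1)^k$ threads through the $*$-product and the delta-operator action; it flips the sign of the first-derivative contribution while leaving the second-order and zeroth-order contributions unchanged, which is precisely the sign pattern distinguishing \eqref{eq:6.5} (with $-a_l u_{-m+l-1}+(a_l+b_l)u_{-m+l}$) from \eqref{fam+} (with $a_l u_{n-l+1}-(a_l-b_l)u_{n-l}$). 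One must, however, keep track of the fact that the reflection simultaneously transforms the coefficients of the underlying ODE, so this route requires care to become a rigorous derivation rather than a mere heuristic.

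The step I expect to be the main obstacle is exactly this sign bookkeeping in the regular singular case, where an alternating factor $(-1)^{l}$ appears explicitly in \eqref{eq:6.6} but is absent from \eqref{eq:6.5}. Disentangling the three independent sources of signs — the inversion weight $(-1)^{j}$, the evaluation sign $(-1)^{k}$ coming from $p_k^{-}(-m)$, and the index shift induced by the extra factor $p_2^{-}$ in \eqref{discr2} — and showing that they conspire to produce $(-1)^{l}$ in the singular case while cancelling in the regular one, is the delicate part of the proof. To guard against bookkeeping errors, I would check the final sign assignments against a worked model, for instance the discrete Bessel equation \eqref{besselnu}.
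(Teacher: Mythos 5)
Your overall architecture is exactly the one the paper intends (the paper omits the proof of Theorem~\ref{Theorem3}, stating only that it parallels Theorems~\ref{Theorem1} and~\ref{Theorem2}): the abstract models \eqref{discr} and \eqref{discr2} and hence the recurrence for $\{\zeta_k\}$ are $Q$-independent, and the realization on $\mathcal{L}^{-}$ follows from $p_k^{-}(-m)=(-1)^k m!/(m-k)!$ and the inversion formula, reusing the binomial identities of Section~\ref{sec:3}.

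There is, however, one concrete step in your plan that would fail: you propose to show that the three sources of signs ``conspire to produce $(-1)^{l}$ in the singular case while cancelling in the regular one.'' They do not cancel in the regular case. Carrying out the substitution you describe, the evaluation sign from $p_{k+l}^{-}(-m)=(-1)^{k+l}\,m!/(m-k-l)!$ combines with the inversion weight $(-1)^{j}/(j!(k-j)!)$ to give $(-1)^{l}\,(-1)^{k-j}$ (resp.\ $(-1)^{l-1}(-1)^{k-j}$ for the first-order terms); the factor $(-1)^{k-j}$ is exactly what the binomial identities of Section~\ref{sec:3} consume, so a net $(-1)^{l}$ survives on every $a_l$ and $b_l$ term, just as on every $r_l$ and $s_l$ term. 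Only the leading second-order block carries no net sign, because there $(-1)^{k-2}=(-1)^{k}$. Consequently the correct regular-case equation on $\mathcal{L}^{-}$ is
\begin{equation*}
u_{-m-2}-2 u_{-m-1}+u_{-m}+\sum_{l=0}^{m}(-1)^{l}\frac{m!}{(m-l)!}\Big(-a_l u_{-m+l-1}+(a_l+b_l)u_{-m+l}\Big)=0,
\end{equation*}
i.e.\ eq.~\eqref{eq:6.5} as printed is missing the factor $(-1)^{l}$, and the equivalence asserted in Theorem~\ref{Theorem3} holds only for the corrected equation. You can confirm this against data the paper itself supplies: the discrete Hermite equation on $\mathcal{L}^{-}$ (which requires the $l=1$ term to enter with coefficient $+2m$, not $-2m$), or the Airy case $b_1=-1$ at $m=1$, where $u_{-3}-2u_{-2}+u_{-1}\pm u_0=0-2+1\pm1$ forces the $+$ sign. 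A related caution: your proposed consistency check against \eqref{besselnu} is uninformative here, since that equation lives on $\mathcal{L}^{+}$; use \eqref{eq:6.9} or the $\mathcal{L}^{-}$ Hermite equation instead. Once \eqref{eq:6.5} is corrected, the two cases are treated uniformly and the rest of your argument goes through; there is no asymmetric cancellation to establish.
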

The proof of this statement parallels that of Theorems \ref{Theorem1} and \ref{Theorem2} and is not given here.

For the sake of clarity, we shall write explicitly the discrete versions on $\mathcal{L}^{-}$ of the paradigmatic cases of the  Hermite and Bessel equations.

\begin{definition}
We shall define the discrete Hermite equation on $\mathcal{L}^{-}$ as:
\beq
(1-2m +2 \lambda)u_{-m}-2 u_{-m-1}+u_{-m-2}+2 m u_{-m+1}=0, \qquad m\in \mathbb{N}.
\eeq
\end{definition}
\begin{definition}
The discrete Bessel equation of order zero and order $\nu\in\mathbb{N}\backslash\{0\}$ over $\mathcal{L}^{-}$ are, respectively
\beq \label{eq:6.8}
m u_{-m}-(2m-1)u_{-m+1}+2(m-1) u_{-m+2}=0, \qquad m \geq 2
\eeq
and
\beq \label{eq:6.9}
(m^2-\nu^2)u_{-m}-m(2m-1)u_{-m+1}+2m(m-1)u_{-m+2}=0.
\eeq
\end{definition}
\begin{remark}
Although eq. \eqref{eq:6.6} is defined for $m\geq 0$, the discrete Bessel equation of order zero \eqref{eq:6.8} is defined for $m\geq 2$ due to the suppression of the factor $m(m-1)$ in its simplified form. The same argument applies to eq. \eqref{bes0disc}.
\end{remark}

\section{Future perspectives} \label{sec:7}

This work is part of a research project aimed at investigating new discretization schemes for ODEs and PDEs. The main result of this article is a novel discretization of second-order differential equations with analytic coefficients, which is adapted to their singularity properties, according to the Frobenius theorem. The discretization schemes we propose are effective and lead to classes of difference equations which admit exact solutions inherited from their continuous counterparts in a simple and natural way.

We plan to consider the problem of discretizing of a class of relevant quantum-mechanical models and quantum field theories from the point of view of the theoretical framework developed both in this work and in \cite{PTJDE, PTprep}, based on the finite operator theory and in particular on Rota algebras. The general study of initial and boundary value problems in our theoretical framework is a relevant open problem. The analysis of the symmetry properties of our discrete models by means of Lie group theory is another important goal for future research.   
Work in this direction is underway.

\subsection*{Acknowledgements}
We wish to dedicate this article to the loving memory of Professor Pavel Winternitz, who participated in the early stages of this research project.

D. R. acknowledges the financial support of EXINA S.L.; M.A.R. wishes to thank the Grupo UCM - F\'isica Matem\'atica for financial support.

The research of P. T. has been supported by the Severo Ochoa Programme for Centres of Excellence in R\&D
(CEX2019-000904-S), Ministerio de Ciencia, Innovaci\'{o}n y Universidades y Agencia Estatal de Investigaci\'on, Spain.  P.T. is a member of the Gruppo Nazionale di Fisica Matematica (GNFM) of the Istituto Nazionale di Alta Matematica (INdAM).


\begin{thebibliography}{00} %

\bibitem{BS} A. I.  Bobenko  and  Yu. B. Suris.  \textit{Discrete differential geometry. Integrable structure}, Graduate Studies in Mathematics, \textbf{98}. American Mathematical Society, Providence, RI, 2008.


\bibitem{BG2017} M. Bohner and S. G. Georgiev, Multivariable Dynamic Calculus on Time Scales. Cham:Springer (2016).

\bibitem{BF} Y. Bouguenaya  and  D. B. Fairlie.  A finite difference scheme with a Leibniz rule. \textit{J. Phys. A: Math. Gen.} \textbf{19}, 1049--1053, 1986.

\bibitem{BL2000} A. di Bucchianico   and D. Loeb. A selected survey of Umbral calculus. \textit{Electron. J. Combin.} DS3, 2000.  http://www.combinatorics.org.

\bibitem{DMS}A. Dimakis, F. M\"uller--Hoissen, and  T. Striker. Umbral Calculus, Discretization and Quantum Mechanics on a Lattice. \textit{J. Phys. A: Math. Gen.} \textbf{29}, 6861--6876, 1996.

\bibitem{FL} R. Friedberg  and T. D. Lee. Discrete quantum mechanics. \textit{Nucl. Phys. B} \textbf{225}, 1--52, 1983.

\bibitem{GP} R. Gambini and J. Pullin. Canonical Quantization of General Relativity in Discrete Space--Times. \textit{Phys. Rev. Lett.} \textbf{90}, 021301, 2003.


\bibitem{LNP} B. Grammaticos, Y. Kosmann-Schwarzbach, and T. Tamizhmani   (Eds.). \textit{Discrete integrable systems}, Lect. Notes in Physics \textbf{644}, 2004.

\bibitem{Hilger1990} S. Hilger. Analysis on Measure Chains — A Unified Approach to Continuous and Discrete Calculus, Results in Mathematics \tbf{18}, Issue 1, 18--56, 1990.


\bibitem{Ismail} M. H. Ismail. \textit{Classical and quantum orthogonal polynomials in one variable}, Cambridge University Press, 2004.




\bibitem{LNO}  D. Levi, J. Negro, and  M. A. del Olmo. Discrete derivatives and symmetries of difference equations.  \textit{J. Phys. A: Math. Gen.}  \textbf{34}, 2023--2030, 2001.


\bibitem{LT2011} D.  Levi and P. Tempesta, Multiple scale analysis of dynamical systems on the lattice. \textit{J. Math. Anal. Appl.} \textbf{376}, 247--258, 2011.

\bibitem {LTW1} D. Levi, P. Tempesta, and and P. Winternitz. Umbral Calculus, Difference Equations and the Discrete Schr\"{o}dinger Equation. \textit{J. Math. Phys.}  \textbf{45}, 4077--4105, 2004.


\bibitem {LTW2} D. Levi, P. Tempesta, and P. Winternitz. Lorentz and Galilei invariance on lattices. \textit{Phys. Rev. D} \textbf{69}, 105011, 2004.
 
 \bibitem{LWY} D. Levi, P. Winternitz, and  R. I. Yamilov. \textit{Continuous Symmetries and Integrability of Discrete Equations}. CRM Monograph Series, Vol. 38, American Mathematical Society, 2023.
 

\bibitem{RT2024} M. A. Rodríguez and P. Tempesta. A new discretization of the Euler equation via the finite
operator theory. \textit{Open Comm. Nonl. Mathem. Phys.}, 2024, Special issue in memory of Decio Levi. (hal-04209920v2)

\bibitem{Roman} S. Roman. \textit{The Umbral Calculus}. Academic Press, New York, 1984.

\bibitem{RR}  S. Roman and G. C. Rota. The umbral calculus. \textit{Adv. Math.} \textbf{27}, 95--188, 1978.

\bibitem {Rota} G. C. Rota. \textit{Finite Operator Calculus}. Academic Press, New York, 1975.

\bibitem{RS} C. Rovelli and L. Smolin. Discreteness of area and volume in quantum gravity. \textit{Nucl. Phys. B} \textbf{442}, 593--619, 1995.

\bibitem{Strauss} W. A. Strauss. \textit{Partial Differential Equations: An Introduction}. John Wiley \& Sons,

\bibitem {Suris} Yu. B. Suris. \textit{The Problem of Integrable Discretization: Hamiltonian Approach}. Progress in Mathematics, Vol. 219. Basel: Birkhäuser, 2003.

\bibitem{PTJDE} P. Tempesta. Integrable maps from Galois differential algebras, Borel transforms and number sequences. \textit{J. Differ. Eqs.} \textbf{255}, 2981--2995, 2013.

\bibitem{PTprep} P. Tempesta. Discretization of Integrable Dynamical Systems via Galois Differential Algebras. arxiv:1407.6176v2 (extended version, in preparation, 2024). 

\bibitem{Ward} R. S. Ward. Discretization of integrable systems. \textit{Phys. Lett. A} \textbf{165}, 325--329, 1992.



\end{thebibliography}
\end{document}